\newtheorem{remark}{Remark}
\newtheorem{theorem}{Theorem}
\newtheorem{definition}{Definition}
\newtheorem{lemma}{Lemma}
\begin{document}
%
\title{Graph-Theoretic Characterizations of Structural Controllability for
Multi-Agent System with Switching Topology*}
%
%
%

\author{Xiaomeng Liu$^{a}$, Hai
Lin$^{b}$$^{\ast}$,~\IEEEmembership{Senior Member,~IEEE,} and Ben M.
Chen$^{a}$, ~\IEEEmembership{Fellow,~IEEE}
\thanks{$^\ast$Corresponding author. Email:
hlin1@nd.edu  tel 574-6313177 fax 574-6314393}
\thanks{$^{a}${\em{Dept. of
ECE, National University of Singapore, Singapore}}~~$^{b}${\em{Dept
of EE, Univ of Notre Dame, Notre Dame, IN 46556, USA}}}
}

%
%

\markboth{}%
{Xiaomeng Liu \MakeLowercase{\textit{et al.}}: Structural
Controllability of Switched Linear Systems}
%



\maketitle

\begin{abstract}                          
This paper considers the controllability problem for multi-agent
systems. In particular, the structural controllability of
multi-agent systems under switching topologies is investigated. The
structural controllability of multi-agent systems is a
generalization of the traditional controllability concept for
dynamical systems, and purely based on the communication topologies
among agents. The main contributions of the paper are
graph-theoretic characterizations of the structural controllability
for multi-agent systems. It turns out that the multi-agent system
with switching topology is structurally controllable if and only if
the union graph $\mathcal{G}$ of the underlying communication
topologies is connected (single leader) or leader-follower connected
(multi-leader). Finally, the paper concludes with several
illustrative examples and discussions of the results and future
work.
\end{abstract}

\section{Introduction}
Due to the latest advances in communication and computation, the
distributed control and coordination of the networked dynamic agents
has rapidly emerged as a hot multidisciplinary research area
\cite{lb}\cite{dm}\cite{bf}, which lies at the intersection of
systems control theory, communication and mathematics.
In addition, the advances of the research in multi-agent systems are
strongly supported by their promising civilian and military
applications, such as cooperative control of unmanned air
vehicles(UAVs), autonomous underwater vehicles(AUVs), space
exploration, congestion control in communication networks, air
traffic control and so on \cite{tp}\cite{hkk}. Much work has been
done on the formation stabilization and consensus seeking.
Approaches like graph Laplacians for the associated neighborhood
graphs, artificial potential functions, and navigation functions for
distributed formation stabilization with collision avoidance
constraints have been developed. Furthermore, inspired by the
cooperative behavior of natural swarms, such as bee flocking, ant
colonies and fish schooling, people try to obtain experiences from
how the group units make their whole group motions under control
just through limited and local interactions among them.

The control of such large scale complex systems poses several new
challenges that fall beyond the traditional methods. Part of the
difficulty comes from the fact that the global behavior of the whole
group combined by multiple agents is not a simple summation of the
individual agent's behavior. Actually, the group behavior can be
greatly impacted by the communication protocols or interconnection
topology between the agents, which makes the global behavior display
high complexities. Hence, the cooperative control of multi-agent
systems is still in its infancy and attracts more and more
researchers' attention. One basic question in multi-agent systems
that attracts control engineers' interest is what is the necessary
information exchanging among agents to make the whole group
well-behaved, e.g., controllable. This can be formulated as a
controllability problem for multi-agent systems under the
leader-follower framework. Roughly speaking, a multi-agent system is
controllable if and only if we can drive the whole group of agents
to any desirable configurations only based on local interactions
between agents and possibly some limited commands to a few agents
that serve as leaders. The basic issue is the interplay between
control and communication. In particular, we would like to
investigate what is the necessary and/or sufficient condition on the
graph of communication topologies among agents for the
controllability of multi-agent systems.

This multi-agent controllability problem was first proposed in
\cite{ht}, which formulated it as the controllability of a linear
system and proposed a necessary and sufficient algebraic condition
based on the eigenvectors of the graph Laplacian. Reference
\cite{ht} focused on fixed topology situation with a particular
member which acted as the single leader. The problem was then
developed in \cite{mj}\cite{mj2}\cite{Raha}\\\cite{zji}, and got
some interesting results. For example, in \cite{mj2}, it was shown
that a necessary and sufficient condition for controllability is not
sharing any common eigenvalues between the Laplacian matrix of the
follower set and the Laplacian matrix of the whole topology.
However, it remains elusive on what exactly the graphical meaning of
these algebraic conditions related to the Laplacian matrix. This
motivates several research activities on illuminating the
controllability of multi-agent systems from a graph theoretical
point of view. For example, a notion of anchored systems was
introduced in \cite{Raha} and it was shown that symmetry with
respect to the anchored vertices makes the system uncontrollable.
However, so far the research progress using graph theory is quite
limited and a satisfactory graphical interpretation of these
algebraic controllability conditions turns out to be very
challenging. Besides, the weights of communication linkages among
agents have been demonstrated to have a great influence on the
behavior of whole multi-agent group (see \cite{m} as an example).
However, in the previous multi-agent controllability literature
\cite{ht}-\cite{Raha}, the communication weighting factor is usually
ignored. One classical result under this no weighting assumption is
that a multi-agent system with complete graphical communication
topology is uncontrollable \cite{ht}. This is counter-intuitive
since it means each agent can get direct information from each other
but this leads to a bad global behavior as a team. This shows that
too much information exchange may damage the controllability of
multi-agent system. In contrast, if we set weights of unnecessary
connections to be zero and impose appropriate weights to other
connections so as to use the communication information in a
selective way, then it is possible to make the system controllable
\cite{mz}.

In this paper, motivated by the above observation, the weighting
factor is taken into account for multi-agent controllability
problem. In particular, a new notion for the controllability of
multi-agent systems, called structural controllability, which was
proposed by us in \cite{mz}, is investigated directly through the
graph-theoretic approach for control systems. The communication
topology of whole multi-agent system is described by a weighted
graph and the system is called structurally controllable if one may
find a set of weights such that the corresponding multi-agent system
is controllable in a classical sense. The structural controllability
reveals under certain topology whether it is possible to make the
whole multi-agent system well-behaved, i.e. controllable here
through suitable choice of the communication weights. From another
angle, it helps to bring to light the effects of the communication
topology on the controllability property of multi-agent systems
without worrying about the influence of weighting factors. It turns
out that this controllability notation only depends on the topology
of the communication scheme in the case of a single leader under a
fixed topology \cite{mz}.

Another novelty in this paper is the successful investigation of
impacts of switching topologies on the multi-agent controllability
property, for which there is barely graphical interpretation to the
best of our knowledge. Note that the results in
\cite{ht}\cite{mj}\cite{mj2}\\\cite{Raha}\cite{zji}\cite{mz} are all
focused on multi-agent systems under fixed communication topologies
which may restrict their impacts on real applications. 
In many applications, it may become impossible to keep the
communication topology fixed for the whole period. Therefore, it is
of practical importance to consider time varying communication
topologies. A natural framework to study the time variance of
communication topology is through switched systems, see e.g.,
\cite{linhai}\cite{linhai2}\cite{sunzd}. In this paper, we will
focus on multi-agent systems under switching topologies in the
framework of switched systems. Some early efforts have been observed
in the literature. Necessary and sufficient algebraic conditions for
the controllability of multi-agent systems under switching topology
were derived in \cite{zji2}\cite{liub} based on the developments of
controllability study in switched systems. However, these algebraic
results lacks graphically intuitive interpretations, which are
important since they can provide us significant guidelines for the
communication protocol design for multi-agent systems. Therefore,
this paper aims to fill this gap and propose graphic interpretations
of these algebraic conditions for the controllability of multi-agent
systems under switching topology. In particular, we follow the setup
in \cite{mz} and investigate the structural controllability of
multi-agent systems with switching communication topologies and for
both single leader and multi-leader cases. It is assumed that the
leaders act as the external or control signal and will not be
affected by any other group members. Based on this structural
controllability, we propose necessary and sufficient graph theoretic
conditions for the structural controllability of multi-agent system
with switching topologies. It turns out that the multi-agent system
with switching topology is structurally controllable if and only if
the union graph $\mathcal{G}$ of the underlying communication
topologies is connected (single leader) or leader-follower connected
(multi-leader). Some examples are given to underscore our
theoretical analysis.

The outline of this paper is as follows: In Section II, we introduce
some basic preliminaries and the problem formulation, followed by
structural controllability study in Section III, where a graphic
necessary and sufficient condition for the structural
controllability under single leader case is given. In Section IV,
graphical interpretation of structural controllability of
multi-leader multi-agent system is proposed. In Section V, some
examples are presented to give the readers deeper understanding of
our theoretical results. Finally, some concluding remarks are drawn
in the paper.
\section{Preliminaries and Problem Formulation}
\subsection{Graph Theory Preliminaries}
 A weighted graph is an appropriate representation for the communication or sensing links
among agents because it can represent both existence and strength of
these links among agents. The weighted graph $\mathcal{G}$ with $N$
vertices consists of a vertex set $\mathcal{V} = \{ v_1 ,v_2 ,
\ldots ,v_N \} $ and an edge set $\mathcal{I}= \{ e_1 ,e_2 , \ldots
,e_{N'} \} $, which is the interconnection links among the vertices.
Each edge in the weighted graph represents a bidirectional
communication or sensing media. Two vertices are known to be
\textit{neighbors} if $(i,j)\in \mathcal{I}$, and the number of
neighbors for each vertex is its \textit{valency}. An alternating
sequence of distinct vertices and edges in the weighted graph is
called a \textit{path}. The weighted graph is said to be
\textit{connected} if there exists at least one path between any
distinct vertices, and \textit{complete} if all vertices are
neighbors to each other.

The \textit{adjacency matrix}, $\mathcal{A}$ is defined as
\begin{equation}
\mathcal{A}_{(i,j)}=\left\{\begin{array}{cc}
  \mathcal{W}_{ij}&(i,j)\in \mathcal{I}, \hfill  \\
   0&otherwise,\hfill  \\
\end{array}\right.
\end{equation}
where $\mathcal{W}_{ij}$ $\neq$ 0 stands for the weight of edge
$(j,i)$. Here, the adjacency matrix $\mathcal{A}$ is $\mid
\mathcal{V}\mid\times\mid\mathcal{V}\mid$ and $\mid.\mid$ is the
cardinality of a set.

The \textit{Laplacian matrix} of a graph $\mathcal{G}$, denoted as
$\mathcal{L(G)} \in \mathbb{R}^{\mid
\mathcal{V}\mid\times\mid\mathcal{V}\mid}$ or $\mathcal{L}$ for
simplicity, is defined as
\begin{equation}
\mathcal{L}_{(i,j)}=\left\{\begin{array}{cc}
  \Sigma_{j\in \mathcal{N}_i}~\mathcal{W}_{ij}&i=j,  \\
  -\mathcal{W}_{ij}&i\neq j~and~(i,j)\in \mathcal{I}, \\
   0&otherwise. \\
\end{array}\right.
\end{equation}
\subsection{Multi-agent Structural Controllability with Switching Topology}
Specifically, controllability problem usually cares about how to
control $N$ agents based on the leader-follower framework. Take the
case of single leader as example. Without loss of generality, assume
that the $N$-th agent serves as the leader and takes commands and
controls from outside operators directly, while the rest $N-1$
agents are followers and take controls as the nearest neighbor law.

Mathematically, each agent's dynamics can be seen as a point mass
and follows
\begin{equation}\label{eq11}
 \dot{x}_i=u_i.
\end{equation}
The control strategy for driving all follower agents is
\begin{equation}\label{eq12}
 u_i  =  - \sum\nolimits_{j \in \mathcal{N}_i} {w_{ij}(x_i -
x_j )}+w_{ii}x_i,
\end{equation}
where $\mathcal{N}_i$ is the neighbor set of the agent $i$ (could
contain $i$ itself), and $w_{ij}$ is weight of the edge from  agent
$j$ to agent $i$. 
On the other hand, the leader's control signal is not influenced by
the followers and needs to be designed, which can be represented by
\begin{equation}
\dot x_N = u_N.
\end{equation}

In other words, the leader affects its nearby agents, but it does
not get directly affected by the followers since it only accepts the
control input from an outside operator. For simplicity, we will use
$z$ to stand for $x_N$ in the sequel. It is known that the whole
multi-agent system under fixed communication topology can be written
as a linear system:
\begin{equation}\label{eq13}
\begin{array}{*{20}c}
   {\left[ {\begin{array}{*{20}c}
   {\dot x}  \\
   {\dot z}  \\
\end{array}} \right] = \left[ {\begin{array}{*{20}c}
   A & B  \\
   0 & 0  \\
\end{array}} \right]\left[ {\begin{array}{*{20}c}
   x  \\
   z  \\
\end{array}} \right] + \left[ {\begin{array}{*{20}c}
   0  \\
   {u_N }  \\
\end{array}} \right]} & {}
\end{array},
\end{equation}
where $A \in \mathbb{R}^{(N-1) \times (N-1)}$ and $B\in
\mathbb{R}^{(N-1) \times 1}$ are both sub-matrices of the
corresponding graph Laplacian matrix $-\mathcal{L}$ 


The communication network of dynamic agents with directed
information flow under link failure and creation can usually
described by switching topology. Under $m$ switching topologies, it
is clear that the whole system equipped with $m$ subsystems can be
written in a compact form
\begin{equation}\label{eq13}
\begin{array}{*{20}c}
   {\left[ {\begin{array}{*{20}c}
   {\dot x}  \\
   {\dot z}  \\
\end{array}} \right] = \left[ {\begin{array}{*{20}c}
   A_i & B_i  \\
   0 & 0  \\
\end{array}} \right]\left[ {\begin{array}{*{20}c}
   x  \\
   z  \\
\end{array}} \right] + \left[ {\begin{array}{*{20}c}
   0  \\
   {u_N }  \\
\end{array}} \right]} & {}
\end{array},
\end{equation}
or, equivalently,
\begin{equation}\label{eq14}
\left\{\begin{array}{clc}
   \dot x &=& A_i x + B_i z , \hfill  \\
   \dot z &=& u_N , \hfill  \\
\end{array}\right.
\end{equation}
where $i\ \in \{1,\ldots,m\}$. $A_i \in \mathbb{R}^{(N-1) \times
(N-1)}$ and $B_i\in \mathbb{R}^{(N-1) \times 1}$ are both
sub-matrices of the corresponding graph Laplacian matrix
$-\mathcal{L}$. The matrix $A_i$ reflects the interconnection among
followers, and the column vector $B_i$ represents the relation
between followers and the leader under corresponding subsystems.
Since the communication topologies among agents are time-varying, so
the matrices $A_i$ and $B_i$ are also varying as a function of time.
Therefore, the dynamical system described in (\ref{eq13}) can be
naturally modeled as a switched system (definition can be found
latter).

Considering the structural controllability of multi-agent system,
system matrices $A_i$ and $B_i$, $i\ \in \{1,\ldots,m\}$ are
structured matrices, which means that their elements are either
fixed zeros or free parameters. Fixed zeros imply that there is no
communication link between the corresponding agents and the free
parameters stand for the weights of the communication links. Our
main task here is to find out under what kinds of communication
topologies, it is possible to make the group motions under control
and steer the agents to the specific geometric positions or
formation as a whole group. Now this controllability problem reduces
to whether we can find a set of weights $w_{ij}$ such that the
multi-agent system (\ref{eq13}) is controllable. 
Then the controllability problem of multi-agent system can now be
formulated as the structural controllability problem of switched
linear system (\ref{eq13}):
\begin{definition}\label{def9} The multi-agent system (\ref{eq13}) with switching topology, whose matrix elements are zeros or undetermined parameters, is
said to be structurally controllable if and only if there exist a
set of communication weights $w_{ij}$ that can make the system
(\ref{eq13}) controllable in the classical sense.
\end{definition}
\subsection{Switched Linear System and Controllability Matrix}

In general, a switched linear system is composed of a family of
subsystems and a rule that governs the switching among them, and is
mathematically described by
\begin{eqnarray}\label{eq2}
\dot x(t)=&A_{\sigma(t)} x(t)+B_{\sigma(t)} u(t) ,
\end{eqnarray}
where $x(t)\in \mathbb{R}^n$ are the states, $u(t)\in
\mathbb{R}^{r}$, are piecewise continuous input,
$\sigma:[t_0,\infty)\rightarrow M \triangleq \{1,\ldots,m\}$ is the
switching signal. System (\ref{eq2}) contains $m$ subsystems
$(A_i,B_i),$ $i\in \{1,\ldots,m\}$ and $\sigma(t)$= $i$ implies that
the $i$th subsystem $(A_i,B_i)$ is activated at time instance $t$

For the controllability problem of switched linear systems, a
well-known matrix rank condition was given in \cite{ZS}:

\begin{lemma}\label{lem1}(\cite{ZS}) If matrix:\begin{equation}\begin{split}\label{eq3} &[B_1,\ldots,B_m,
A_1B_1,\ldots,A_mB_1,\ldots,A_mB_m,A_1^2B_1,\ldots,A_mA_1B_1,\\&\ldots,A_1^2B_m,\ldots,
A_mA_1B_m,\ldots,A_1^{n-1}B_1,\ldots,A_mA_1^{n-2}B_1,\ldots,\\&A_1A_m^{n-2}B_m,\ldots,
A_m^{n-1}B_m] \end{split}\end{equation}\\ has full row rank $n$,
then switched linear system (\ref{eq2}) is controllable, and vice
versa.
\end{lemma}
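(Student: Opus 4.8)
The plan is to identify the controllable subspace of the switched system (\ref{eq2}) with the column space of the matrix in (\ref{eq3}), and then to show controllability is equivalent to that subspace being all of $\mathbb{R}^n$. Let $\mathcal{C}$ denote the smallest subspace of $\mathbb{R}^n$ that contains $\mathrm{Im}\,B_i$ for every $i\in\{1,\dots,m\}$ and is invariant under each $A_i$, i.e.\ the smallest common $\{A_1,\dots,A_m\}$-invariant subspace containing all the input images. First I would argue that $\mathcal{C}$ is exactly the column span of (\ref{eq3}). To see this, build the nested family $\mathcal{C}_0\subseteq\mathcal{C}_1\subseteq\cdots$ with $\mathcal{C}_0=\mathrm{Im}\,[B_1\ \cdots\ B_m]$ and
\[
\mathcal{C}_{k+1}\;=\;\mathrm{Im}\,[B_1\ \cdots\ B_m]\;+\;\sum_{i=1}^{m}A_i\,\mathcal{C}_k .
\]
By construction $\mathcal{C}_k$ is spanned by all products $A_{j_1}\cdots A_{j_\ell}B_p$ of word length $\ell\le k$; if $\mathcal{C}_{k+1}=\mathcal{C}_k$ then the recursion forces $\mathcal{C}_{k+j}=\mathcal{C}_k$ for all $j\ge 0$, so the chain stabilizes after at most $n-1$ steps because its dimension is bounded by $n$ and strictly increases until it stops. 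Hence $\mathcal{C}=\mathcal{C}_{n-1}$, whose spanning products are precisely the columns listed in (\ref{eq3}); therefore (\ref{eq3}) has rank $n$ iff $\mathcal{C}=\mathbb{R}^n$. (The same Cayley--Hamilton--type observation shows that no longer words contribute anything new, which is why the list in (\ref{eq3}) can be truncated at word length $n-1$.)

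It remains to prove that the reachable set $\mathcal{R}$ of (\ref{eq2}) from the origin --- over all switching signals, switching times, and admissible inputs --- coincides with $\mathcal{C}$; controllability then follows in the usual way. The inclusion $\mathcal{R}\subseteq\mathcal{C}$ is the easy direction: for a fixed piecewise-constant switching signal the state at any switching or terminal time is obtained by composing maps of the form $x\mapsto e^{A_i t}x$ with variation-of-constants terms $\int_0^{t}e^{A_i(t-\tau)}B_i u(\tau)\,d\tau$; each $e^{A_i t}$ is a polynomial in $A_i$ by Cayley--Hamilton, and $\mathcal{C}$ contains every $\mathrm{Im}\,B_i$ and is invariant under every $A_i$, so the whole trajectory stays in $\mathcal{C}$. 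In particular, if (\ref{eq3}) does not have full row rank then $\mathcal{R}\subseteq\mathcal{C}\subsetneq\mathbb{R}^n$ and the system is not controllable, which is the ``vice versa'' part.

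The converse inclusion $\mathcal{C}\subseteq\mathcal{R}$ is where the real work lies and is the step I expect to be the main obstacle. The strategy is to show that $\mathcal{R}$ is itself a closed subspace that contains every $\mathrm{Im}\,B_i$ and is invariant under every $A_i$; minimality of $\mathcal{C}$ then yields $\mathcal{C}\subseteq\mathcal{R}$. Containment of $\mathrm{Im}\,B_i$ follows by driving subsystem $i$ alone from the origin over a short interval. Invariance under $A_i$ follows by taking any reachable state $\xi$, appending an extra interval of length $s$ in mode $i$ to obtain $e^{A_i s}\xi\in\mathcal{R}$, and differentiating at $s=0$ --- provided one already knows $\mathcal{R}$ is closed, so the derivative stays in $\mathcal{R}$. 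The delicate points are thus (i) that $\mathcal{R}$ is closed under addition, arranged by concatenating two trajectories and absorbing the transition map of the second segment, and (ii) that $\mathcal{R}$ is topologically closed; both are handled by exploiting the analytic dependence of the endpoint on the finitely many switching times and on the input, together with the fact that each fixed-signal reachable set $\mathcal{R}_\sigma(T)$ is a linear image and that only word lengths up to $n-1$ matter. An alternative route to $\mathcal{C}\subseteq\mathcal{R}$ that bypasses part of this is to realize each generator $A_{j_1}\cdots A_{j_\ell}B_p\,v$ directly as a limit, by switching through modes $p,j_\ell,\dots,j_1$ with a constant pulse input in mode $p$ and differentiating the endpoint with respect to the successive dwell times evaluated at zero. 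Either way, once $\mathcal{R}=\mathcal{C}$ is established, full row rank of (\ref{eq3}) is equivalent to $\mathcal{R}=\mathbb{R}^n$, i.e.\ to controllability of (\ref{eq2}), completing the proof.
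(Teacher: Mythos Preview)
The paper does not actually prove this lemma: it is quoted verbatim from \cite{ZS} as a known controllability criterion for switched linear systems and is used as a black box in the rest of the paper. So there is no ``paper's own proof'' to compare your attempt against.

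That said, your sketch is essentially the standard argument in \cite{ZS}. The identification of the column span of (\ref{eq3}) with the smallest common $\{A_1,\dots,A_m\}$-invariant subspace $\mathcal{C}$ containing all $\mathrm{Im}\,B_i$, via the stabilizing chain $\mathcal{C}_0\subseteq\mathcal{C}_1\subseteq\cdots$, is exactly how the reference sets things up, and the inclusion $\mathcal{R}\subseteq\mathcal{C}$ is handled as you describe. The one place where your write-up is softer than the original is the reverse inclusion $\mathcal{C}\subseteq\mathcal{R}$: rather than arguing abstractly that $\mathcal{R}$ is a closed subspace (which, as you note, risks circularity when you differentiate $e^{A_is}\xi$ at $s=0$), Sun and Ge construct an explicit periodic switching path that cycles through all modes sufficiently many times and show directly that the reachable set along that single path already equals $\mathcal{C}$. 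This makes the argument finite-dimensional linear algebra throughout and sidesteps the topological closure issue entirely. Your ``alternative route'' via iterated differentiation in the dwell times is in the same spirit and would also work, but the explicit-path construction is cleaner and is what the cited reference does.
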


This matrix is called the controllability matrix of the
corresponding switched linear system (\ref{eq2}).

\section{Structural Controllability of Multi-agent System with Single Leader }

The multi-agent system with a single leader under switching topology
has been modeled as switched linear system (\ref{eq13}). Before
proceeding to the structural controllability study, we first discuss
the controllability of multi-agent system (\ref{eq13}) when all the
communication weights are fixed.

After simple calculation, the controllability matrix of switched
linear system (\ref{eq13}) can be shown to have the following form:
\begin{equation}\left[ {\begin{array}{*{20}c}
  \label{eq3} 0,&\ldots&,0,&B_1,&\ldots,&B_m,&A_1B_1&
\ldots,&A_1A_m^{N-3}B_m,&\ldots,&
A_m^{N-2}B_m \\
  1,&\ldots&,1,& 0, & 0, & 0,& 0,& 0,&0,&0,&0\\
\end{array}} \right]\nonumber\end{equation}
This implies that the controllability of the system (\ref{eq13})
coincides with the controllability of the following system:
\begin{equation}\label{eq6}
\begin{array}{clc}
   \dot x &=& A_i x + B_i z \ \ \ \ \ \ i\ \in \{1,\ldots,m\} \\
   \end{array}.
\end{equation}
Which is the extracted dynamics of the followers that correspond to
the $x$ component of the equation. Therefore,
\begin{definition}\label{def10}The multi-agent system (\ref{eq13}) is said to be structurally controllable
under leader $z$ if system (\ref{eq6}) is structurally controllable
under control input $z$.
\end{definition} For simplicity, we use $(A_i,B_i)~~i\ \in \{1,\ldots,m\}$ to
represent switched linear system (\ref{eq6}) in the sequel.

In (\ref{eq6}), each subsystem $(A_i, B_i)$ can be described by a
directed graph \cite{lin}:

\begin{definition}The representation graph of structured system $(A_i, B_i)$ is a directed graph
$\mathcal{G}_i$, with vertex set $\mathcal{V}_i=\mathcal{X}_i\cup
\mathcal{U}_i$, where $\mathcal{X}_i=\{x_1,x_2,\ldots,x_n\}$, which
is called $state~vertex~ set$ and
$\mathcal{U}_i=\{u_1,u_2,\dots,u_r\}$, which is called
$input~vertex~ set$, and edge set
$\mathcal{I}_i=\mathcal{I}_{U_iX_i}\cup\mathcal{I}_{X_iX_i}$, where
$\mathcal{I}_{U_iX_i}=\{(u_p,x_q)|B_{qp}\neq 0, 1\leq p \leq r,
1\leq q\leq n\}$ and $\mathcal{I}_{X_iX_i}=\{(x_p,x_q)|A_{qp}\neq 0,
1\leq p \leq n, 1\leq q\leq n\}$ are the oriented edges between
inputs and states and between states defined by the interconnection
matrices $A_i$ and $B_i$ above. This directed graph (for notational
simplicity, we will use digraph to refer to directed graph)
$\mathcal{G}_i$ is also called the graph of matrix pair $(A_i,B_i)$
and denoted by $\mathcal{G}_i(A_i,B_i)$.
\end{definition}

For each subsystem, we have got a graph $G_i$ with vertex set
$\mathcal{V}_i$ and edge set $\mathcal{I}_i$ to represent the
underlaying communication topologies. As to the whole switched
system, the representing graph, which is called $union~graph$, is
defined as follows:

\begin{definition} The switched linear system (\ref{eq6}) can be represented by a union digraph,
defined as a flow structure $\mathcal{G}$. Mathematically,
$\mathcal{G}$ is defined as
\begin{eqnarray}
\mathcal{G}_1\cup \mathcal{G}_2\cup\ldots\cup
\mathcal{G}_m=\{\mathcal{V}_1\cup \mathcal{V}_2\cup \ldots \cup
\mathcal{V}_m;\\\mathcal{I}_1\cup \mathcal{I}_2\cup \ldots\cup
\mathcal{I}_m\}\nonumber \end{eqnarray} \end{definition}

\begin{remark}\label{rem3}It turns out that union
graph $\mathcal{G}$ is the representation of linear structured
system: $(A_1+A_2+\ldots+A_m,B_1+B_2+\ldots+B_m)$. \end{remark}

\begin{remark}In lots of literature about controllability of multi-agent systems
\cite{ht}-\cite{liub}, the underlying communication topology among
the agents is represented by undirected graph, which means that the
communication among the agents is bidirectional. Here we still adopt
this kind of communication topology. Then $w_{ij}$ and $w_{ji}$ are
free parameters or zero simultaneously (in numerical realization,
the values of $w_{ij}$ and $w_{ji}$ can be chosen to be different).
Besides, one edge in undirected graph can be treated as two oriented
edges. Consequently, even though all the analysis and proofs for
structural controllability of multi-agent systems are based on the
directed graph (the natural graphic representation of matrix pair
$(A_i,B_i)$ is digraph), the final result will be expressed in
undirected graph form.
\end{remark}

Before proceeding further, we need to introduce two definitions
which were proposed in \cite{lin} for linear structured system $\dot
x=Ax+Bu$ first:

\begin{definition}\label{def5}(\cite{lin}) The matrix pair $(A,B)$ is said to be
reducible or of form I if there exist permutation matrix $P$ such
that they can be written in the following form:
\begin{eqnarray}
PAP^{-1}=\left[
\begin{array}{ccc}
A_{11}&0\\
A_{21}&A_{22}\\
\end{array}\right],PB=\left[
\begin{array}{cc}
0\\
B_{22}\\
\end{array}\right],
\end{eqnarray}\\ where $ A_{11}\in \mathbb{R}^{p \times p}$ , $A_{21 }
\in \mathbb{R}^{(n - p) \times p}$,$ $ $  A_{22}  \in \mathbb{R}^{(n
- p) \times (n- p)} $ and $ B_{22}  \in \mathbb{R}^{(n- p) \times
r}$.
\end{definition}
\begin{remark}\label{rem4} Whenever the matrix pair $(A,B)$ is of form I, the system is
structurally uncontrollable \cite{lin} and meanwhile, the
controllability matrix $Q=\left[B,AB,\ldots,A^{n-1}B\right]$ will
have at least one row which is identically zero for all parameter
values \cite{KL}. If there is no such permutation matrix $P$, we say
that the matrix pair $(A,B)$ is irreducible.
\end{remark}

\begin{definition}\label{def6} (\cite{lin}) The matrix pair $(A,B)$ is said to be of
form II if there exist permutation matrix $P$ such that they can be
written in the following form:
\begin{eqnarray}
\left[PAP^{-1},PB\right]=\left[
\begin{array}{ccc}
P_1\\
P_2
\end{array}\right],
\end{eqnarray}\\ where $P_2\in \mathbb{R}^{(n-k)\times (n+r)}$ ,
$P_1 \in \mathbb{R}^{k \times (n+r)}$ with no more than $k-1$
nonzero columns (all the other columns of $P_1$ have only fixed zero
entries).
\end{definition}
%


%
%
%

Here we need to recall a known result in literature for structural
controllability of multi-agent system with fixed topology \cite{mz}:

\begin{lemma}\label{lem2} The multi-agent system with fixed topology under the
communication topology $\mathcal{G}$ is structurally controllable if
and only if graph $\mathcal{G}$ is connected.
\end{lemma}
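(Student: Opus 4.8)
The plan is to reduce the statement to the classical graph-theoretic characterization of structural controllability for a single linear structured system $\dot x = Ax + Bz$, the one recalled in the excerpt through Definitions~\ref{def5}--\ref{def6} and Remark~\ref{rem4}: the pair $(A,B)$ is structurally controllable if and only if it is neither of form~I nor of form~II. By Definition~\ref{def10} (together with the controllability-matrix computation already displayed for the fixed-topology case), it suffices to establish this equivalence for the extracted follower dynamics $(A,B)$ equipped with the single input vertex $u$ representing the leader $z$. The structural feature I will use repeatedly is that the diagonal entries of $A$ are \emph{free} parameters — this is precisely the role of the self-weights $w_{ii}$ in the control law~(\ref{eq12}) — so that in the representation digraph $\mathcal{G}(A,B)$ every state vertex $x_i$ carries a self-loop.

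For the ``if'' direction, assume $\mathcal{G}$ is connected. I would first translate this into $\mathcal{G}(A,B)$: each undirected edge of $\mathcal{G}$ becomes a pair of opposite arcs, the leader vertex $N$ is identified with the input vertex $u$, and an edge $\{i,N\}$ becomes an arc $u \to x_i$ (equivalently $B_i \neq 0$). Connectivity of $\mathcal{G}$ then yields, for every follower $i$, a directed path $u \to x_{j_1} \to \cdots \to x_i$, so every state vertex is reachable from the input; hence $(A,B)$ cannot be of form~I, which would require a nonempty set of state vertices receiving no arc from the input and none from its complement. For form~II, note that because every state vertex has a self-loop, for any set $S$ of state vertices the set $N^{\mathrm{in}}(S)$ of vertices with an arc into $S$ contains $S$, whence $|N^{\mathrm{in}}(S)| \ge |S|$; thus no dilation occurs and $(A,B)$ is not of form~II. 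By the characterization, $(A,B)$ — and therefore, via Definition~\ref{def10}, the multi-agent system — is structurally controllable.

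For the ``only if'' direction I would argue by contraposition. Suppose $\mathcal{G}$ is disconnected and let $C$ be a connected component not containing the leader $N$; set $S = \{x_i : i \in C\}$. Since no follower in $C$ is adjacent to $N$, we have $B_i = 0$ for all $i \in C$; since no follower in $C$ is adjacent to any vertex outside $C$, we have $A_{ij} = 0$ whenever $i \in C$ and $j \notin C$. Thus, after permuting the state indices so that those of $C$ come first, $(A,B)$ takes the shape of Definition~\ref{def5}, i.e.\ it is of form~I (in the degenerate case where $N$ is isolated, $B$ vanishes identically and the conclusion is immediate). By Remark~\ref{rem4} the controllability matrix $[B, AB, \ldots, A^{N-2}B]$ then has an identically-zero row for every assignment of the weights $w_{ij}$, so the system is not structurally controllable. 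This proves the equivalence.

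The step requiring the most care — and the only one where the details of the model genuinely matter — is the exclusion of form~II. It is the freedom of the self-weights $w_{ii}$ that places a self-loop at every state vertex and thereby rules out dilations \emph{unconditionally}; without those parameters, connectivity alone would not suffice (a star with the leader at the center and no follower--follower links would then be structurally uncontrollable), and one would additionally have to impose a matching/no-dilation condition on $\mathcal{G}$. The remaining steps are routine translations among the undirected graph $\mathcal{G}$, the digraph $\mathcal{G}(A,B)$, and the two normal forms of Definitions~\ref{def5}--\ref{def6}.
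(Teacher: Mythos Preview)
The paper does not prove Lemma~\ref{lem2}; it is quoted verbatim as a known result from \cite{mz} and then used as a black box in the proofs of Theorems~\ref{the1} and~\ref{the4}. So there is no ``paper's own proof'' to compare against.

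That said, your reconstruction is sound and is essentially the argument one expects to find in \cite{mz}. Your necessity direction (a disconnected $\mathcal{G}$ forces the pair $(A,B)$ into form~I, whence an identically zero row in the controllability matrix) is precisely the reasoning the present paper \emph{does} spell out, mutatis mutandis, in the necessity half of Theorem~\ref{the1} for the union pair $(A_1+\cdots+A_m,\,B_1+\cdots+B_m)$. Your sufficiency direction supplies the one ingredient the paper never makes explicit: the free self-weights $w_{ii}$ in~(\ref{eq12}) give every state vertex a self-loop, so for any $k$ rows of $[A\ B]$ the corresponding $k$ diagonal columns are already nonzero, which excludes form~II outright and reduces structural controllability to input-reachability, i.e.\ to connectivity of $\mathcal{G}$. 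The paper alludes to this only obliquely in the numerical-example section (``with the control strategy that each agent can use its own state information, the diagonal elements always have free parameters''). Your closing remark --- that without the $w_{ii}$ terms connectivity alone would not suffice and a no-dilation condition would have to be added --- is correct and worth keeping.
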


This lemma proposed an interesting graphic condition for structural
controllability in fixed topology situation and revealed that the
controllability is totally determined by the communication topology.
However, how about in the switching topology situation? According to
lemma \ref{lem1}, once we impose proper scalars for the parameters
of the system matrix $(A_i,B_i)$ to satisfy the full rank condition,
the multi-agent system (\ref{eq6}) is structurally controllable.
However, this only proposed an algebraic condition. Do we still have
very good graphic interpretation for the relationship between the
structural controllability and switching interconnection topologies?
The following theorem answers this question and gives a graphic
necessary and sufficient condition for structural controllability
under switching topologies.
\begin{theorem}\label{the1}The multi-agent system (\ref{eq6}) with the communication
topologies $\mathcal{G}_i$, $i\in \{1,\ldots,m\}$ is structurally
controllable if and only if the union graph $\mathcal{G}$ is
connected.
\end{theorem}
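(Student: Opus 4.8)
The plan is to prove the two directions separately, using Lemma~\ref{lem2} (the fixed-topology result) together with Remark~\ref{rem3}, which identifies the union graph $\mathcal{G}$ with the representation graph of the single pair $(A_1+\cdots+A_m,\ B_1+\cdots+B_m)$.

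\emph{Necessity.} Suppose the union graph $\mathcal{G}$ is not connected. Then, as in the fixed-topology analysis, there is some state vertex (follower) that cannot be reached from the leader vertex $z$ in $\mathcal{G}$. I would show this forces every pair $(A_i,B_i)$ to share a common ``bad'' structure: after a single permutation $P$ (the one separating the leader-reachable followers from the unreachable ones in $\mathcal{G}$), \emph{every} $(A_i,B_i)$ is simultaneously of form~I, i.e. $PA_iP^{-1}=\left[\begin{smallmatrix}A^i_{11}&0\\ A^i_{21}&A^i_{22}\end{smallmatrix}\right]$, $PB_i=\left[\begin{smallmatrix}0\\ B^i_{22}\end{smallmatrix}\right]$ with the same block sizes. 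The point is that an edge absent from the union graph is absent from each $\mathcal{G}_i$, so the zero blocks are common. Then in the switched controllability matrix of Lemma~\ref{lem1}, every product $A_{i_k}\cdots A_{i_1}B_{i_0}$ has its top $p$ rows identically zero (the block-lower-triangular structure is preserved under products and under multiplication by the $B_i$'s), so the controllability matrix cannot have full row rank for any choice of weights. Hence the system is not structurally controllable.

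\emph{Sufficiency.} Suppose $\mathcal{G}$ is connected. By Remark~\ref{rem3} and Lemma~\ref{lem2}, the single linear structured system $(A_1+\cdots+A_m,\ B_1+\cdots+B_m)$ is structurally controllable, so there is a choice of weights making the ordinary controllability matrix $[\,\bar B,\ \bar A\bar B,\ \ldots,\ \bar A^{\,n-1}\bar B\,]$ of full row rank $n$, where $\bar A=\sum_i A_i$, $\bar B=\sum_i B_i$. The columns of this matrix are $\mathbb{R}$-linear combinations (with integer coefficients) of columns of the switched controllability matrix in~(\ref{eq3}) — each $\bar A^{\,k}\bar B$ expands into a sum of terms $A_{i_k}\cdots A_{i_1}B_{i_0}$, all of which appear as columns of the switched matrix. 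Therefore the column space of the switched controllability matrix contains that of the single-system matrix, so it too has full row rank $n$, and by Lemma~\ref{lem1} the switched system is controllable for that weight assignment. Hence the multi-agent system~(\ref{eq6}) is structurally controllable. One subtlety I would be careful about: structural controllability only asks for \emph{existence} of a good weight set, so the weight choice coming from Lemma~\ref{lem2} is applied identically to the shared free parameters across all subsystems; I should check that the structured-matrix framework allows the same free parameter to be reused consistently across the $A_i$'s and $B_i$'s (it does, since a communication link present in several topologies carries the same symbolic weight, and distinct links are independent parameters).

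\emph{Main obstacle.} The delicate step is the necessity direction — specifically, arguing cleanly that disconnectedness of the \emph{union} graph yields a \emph{single common} permutation putting all $(A_i,B_i)$ simultaneously into form~I, and then that this simultaneous block-triangular form propagates through the arbitrarily-ordered mixed products $A_{i_k}\cdots A_{i_1}B_{i_0}$ that make up the switched controllability matrix~(\ref{eq3}). The sufficiency direction is comparatively routine once Remark~\ref{rem3} is invoked; the real work is making the ``common zero block'' bookkeeping in necessity rigorous, and I expect that to be where most of the proof's length goes.
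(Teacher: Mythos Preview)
Your proposal is correct and follows essentially the same route as the paper: both directions pivot on Remark~\ref{rem3} and Lemma~\ref{lem2}, with sufficiency obtained by observing that the columns of the single-pair controllability matrix for $(\sum_i A_i,\sum_i B_i)$ lie in the column space of the switched controllability matrix~(\ref{eq3}), and necessity by showing that disconnectedness of $\mathcal{G}$ forces a common zero-row structure across all mixed products $A_{i_k}\cdots A_{i_1}B_{i_0}$. Your necessity argument via a single simultaneous form-I permutation is slightly more explicit than the paper's (which instead expands the controllability matrix of the sum pair and invokes Remark~\ref{rem4} to conclude that each summand shares the identically-zero row), but the underlying idea is the same.
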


\begin{proof}\textit{~~Necessity:} Assume that the multi-agent switched
system is structurally controllable, we want to prove that the union
graph $\mathcal{G}$ is connected, which is equivalent to that the
system has no isolated agents in the union graph $\mathcal{G} $
\cite{mz}.

Suppose that the union graph $\mathcal{G}$ is disconnected and for
simplicity, we will prove by contradiction for the case that there
exits only one disconnected agent. The proof can be
straightforwardly extended to more general cases with more than one
disconnected agents. If there is one isolated agent in the union
graph, there are two possible situations: the isolated agent is the
leader or one of the followers. On one hand, if the isolated agent
is the leader, it follows that $ B_1+B_2+\ldots+B_m$  is identically
a null vector. So every $ B_i$ is a null vector. Easily we can
conclude that the controllability matrix for the switched system is
never of full row rank $N-1$, which means that the multi-agent
system is not structurally controllable. On the other hand, if the
isolated agent is one follower, we get that the matrix pair
$(A_1+A_2+\ldots+A_m,B_1+B_2+\ldots+B_m)$ is reducible. By
definition \ref{def5}, the controllability matrix
\begin{equation}\begin{split}[
&B_1+B_2+\ldots+B_m,\\&(A_1+A_2+\ldots +A_m)(
B_1+B_2+\ldots+B_m),\\&,\cdots,\\&(A_1+A_2+\ldots+A_m)^{N-2}(
B_1+B_2+\ldots+B_m)]\nonumber
\end{split}\end{equation}
always has at least one row that is identically zero. Expanding the
matrix yields
\begin{equation}\begin{split}[&B_1+B_2+\ldots+B_m,
\\&A_1B_1+A_2B_1+\ldots+A_mB_1+A_1B_2+A_2B_2\\&
+\ldots+A_mB_2+\ldots +A_1B_m+A_2B_m\ldots+A_mB_m\\&,\ldots,\\&
A_1^{N-2}B_1
+A_2A_1^{N-3}B_1+\ldots+A_m^{N-2}B_m].\nonumber\end{split}\end{equation}
The zero row is identically zero for every parameter. This implies
that every component in this matrix, such as $B_i,A_iB_j~and
~A_i^pA_j^qB_r$, has the same row always to be zero. As a result,
the controllability matrix
\begin{equation}\begin{split}\label{eq3} &[B_1,\ldots,B_m,
A_1B_1,\ldots,A_mB_1,\ldots,A_mB_m,A_1^2B_1,\ldots,A_mA_1B_1,\\&\ldots,A_1^2B_m,\ldots,
A_mA_1B_m,\ldots,A_1^{n-1}B_1,\ldots,A_mA_1^{n-2}B_1,\ldots,\\&A_1A_m^{n-2}B_m,\ldots,
A_m^{n-1}B_m] \end{split}\nonumber\end{equation} always has one zero
row. Therefore, the multi-agent system (\ref{eq6}) is not
structurally controllable. Until now, we have got the necessity
proved.

\textit{Sufficiency}: If the union graph $\mathcal{G}$ is connected,
we want to prove that the multi-agent system (\ref{eq6}) 
is structurally controllable.

According to lemma \ref{lem2},  the connectedness of the union graph
$\mathcal{G}$ implies that the corresponding system
$(A_1+A_2+A_3+\ldots +A_m, B_1+B_2+B_3+\ldots +B_m)$ is structurally
controllable. Then there exist some scalars for the parameters in
system matrices that make the controllability matrix
\begin{equation}\begin{split}
&[B_1+B_2+\ldots+B_m,\\
&(A_1+A_2+\ldots+A_m)( B_1+B_2+\ldots+B_m),\\&,\ldots,
\\&(A_1+A_2+\ldots+A_m)^{N-2}( B_1+B_2+\ldots+B_m)],\nonumber\end{split}\end{equation}

has full row rank $N-1$. Expanding the matrix, it follows that the
matrix
\begin{equation}\begin{split}[&B_1+B_2+\ldots+B_m,
\\&A_1B_1+A_2B_1+\ldots+A_mB_1+A_1B_2+A_2B_2\\&
+\ldots+A_mB_2+\ldots +A_1B_m+A_2B_m\ldots+A_mB_m\\&,\ldots,\\&
A_1^{N-2}B_1
+A_2A_1^{N-3}B_1+\ldots+A_m^{N-2}B_m],\nonumber\end{split}\end{equation}
has full rank $N-1$. Next, we add some column vectors to the above
matrix and get
\begin{equation}\begin{split}
&[B_1+B_2+\ldots+B_m,
B_2,\ldots,B_m,\\&A_1B_1+A_2B_1+\ldots+A_mB_1+A_1B_2+A_2B_2+\ldots+A_mB_2\\&+\ldots+
A_1B_m+A_2B_m+\ldots+A_mB_m, A_2B_1,A_3B_1,\ldots,A_mB_m\\&,\ldots
,\\& A_1^{N-2}B_1 +A_2A_1^{N-3}B_1+\ldots+A_m^{N-2}B_m,
A_2A_1^{N-3}B_1,\ldots,
A_m^{N-2}B_m].\nonumber\end{split}\end{equation} This matrix still
have $N-1$ linear independent column vectors, so it has full row
rank. Next, subtract $B_2,\ldots,B_m$ from $ B_1+B_2+\ldots+B_m$;
subtract $A_2B_1,\ldots,A_mB_m$ from $ A_1B_1+A_2B_1+\ldots+A_mB_m $
and subtract $A_2A_1^{N-3}B_1,\ldots, A_m^{N-2}B_m$ from
$A_1^{N-2}B_1 +A_2A_1^{N-3}B_1+\ldots+ A_m^{N-2}B_m$. Since this
column fundamental transformation will not change matrix
rank, the matrix still has full row rank. Now the matrix becomes\\
\begin{equation}\begin{split}\label{eq3} &[B_1,\ldots,B_m,
A_1B_1,\ldots,A_mB_1,\ldots,A_mB_m,A_1^2B_1,\ldots,A_mA_1B_1,\\&\ldots,A_1^2B_m,\ldots,
A_mA_1B_m,\ldots,A_1^{n-1}B_1,\ldots,A_mA_1^{n-2}B_1,\ldots,\\&A_1A_m^{n-2}B_m,\ldots,
A_m^{n-1}B_m] \end{split}\nonumber\end{equation} which is the
controllability matrix of system (\ref{eq6}) and has full row rank
$N-1$. Therefore, the multi-agent system is structurally
controllable.
\end{proof}
\section{Structural Controllability of Multi-agent System with Multi-Leader}
In the above discussion, we assume the multi-agent system has
totally N agents and the $N$-th agent serves as the leader and takes
commands and controls from outside operators directly, while the
rest $N-1$ agents are followers and take controls as the nearest
neighbor law. In the following part, we will discuss the situation
that several agents are chosen as the leaders of the whole
multi-agent systems, which is actually an extension of single leader
case.

Similar to the single leader case, the multi-agent system with
multiple leaders is given by:
\begin{equation}\label{eq17}
\left\{\begin{array}{clc}
   \dot x_i &=& u_i, \quad\quad\quad\hfill i=1,\ldots,N  \\
   \dot x_{N+j} &=& u_{N+j}, \quad\quad\quad \hfill j=1,\ldots,n_l \\
\end{array}\right.
\end{equation}
where $N$ and $n_l$ represent the number of followers and leaders,
respectively. $x_i$ indicates the state of the $i$th agent,
$i=1,\ldots,N+n_l$.

The control strategy $u_i$, $i=1,\ldots,N$ for driving all follower
agents is the same as the single leader case. The leaders' control
signal is still not influenced by the followers and we are allowed
to pick $u_{N+j}$, $j=1,\ldots,n_l$ arbitrarily. For simplicity, we
use vector $x$ to stand for the followers' states and $z$ to stand
for the leaders' states.

Then the whole multi-agent system equipped with $m$ communication
topologies can be written in a compact form
\begin{equation}\label{eq18}
\begin{array}{*{20}c}
   {\left[ {\begin{array}{*{20}c}
   {\dot x}  \\
   {\dot z}  \\
\end{array}} \right] = \left[ {\begin{array}{*{20}c}
   A_i & B_i  \\
   0 & 0  \\
\end{array}} \right]\left[ {\begin{array}{*{20}c}
   x  \\
   z  \\
\end{array}} \right] + \left[ {\begin{array}{*{20}c}
   0  \\
   {u }  \\
\end{array}} \right]} & {}
\end{array},
\end{equation}
or, equivalently,
\begin{equation}\label{eq19}
\left\{\begin{array}{clc}
   \dot x &=& A_i x + B_i z , \hfill  \\
   \dot z &=& u , \hfill  \\
\end{array}\right.
\end{equation}
where $i\ \in \{1,\ldots,m\}$. $A_i \in \mathbb{R}^{N \times N}$ and
$B_i\in \mathbb{R}^{N \times n_l}$ are both sub-matrices of the
corresponding graph Laplacian matrix $-\mathcal{L}$.

The dynamics of the followers can be extracted as
\begin{equation}\label{eq20}
\begin{array}{clc}
   \dot x &=& A_i x + B_i z ,~~ i\in \{1,\ldots,m\}.
  \end{array}
\end{equation}
\begin{remark}\label{rem6}Similar with the single leader case, the structural controllability of system
(\ref{eq18}) coincides with the structural controllability of system
(\ref{eq20}). And we say that the multi-agent system (\ref{eq17})
with switching topology and multi-leader is structurally
controllable if and only if the switched linear system (\ref{eq20})
is structurally controllable with $z$ as the control inputs.
\end{remark}

Before proceeding further, we first discuss the structural
controllability of multi-agent systems with multi-leader under fixed
topology with the following dynamics:
\begin{equation}\label{eq21}
\begin{array}{clc}
   \dot x &=& Ax + Bz, \end{array}
\end{equation}
where $A \in \mathbb{R}^{N \times N}$ and $B\in \mathbb{R}^{N \times
n_l}$ are both sub-matrices of the graph Laplacian matrix
$-\mathcal{L}$.

In \cite{zji}\cite{ZZHZ}, a new graph topology: leader-follower
connected topology was proposed:

\begin{definition}\label{def12}(\cite{zji})
A follower subgraph $\mathcal{G}_f$ of the interconnection graph
$\mathcal{G}$ is the subgraph induced by the follower set
$\mathcal{V}_f$ (here is $x$). Similarly, a leader subgraph
$\mathcal{G}_l$ is the subgraph induced by the leader set
$\mathcal{V}_l$ (here is $z$).
\end{definition}

Denote by $\mathcal{G}_{c_1},\ldots,\mathcal{G}_{c_\gamma},$ the
connected components in the follower $\mathcal{G}_f$. The definition
of leader-follower connected topology is as follows:

\begin{definition}\label{def13}(\cite{ZZHZ})
The interconnection graph $\mathcal{G}$ of multi-agent system
(\ref{eq21}) is said to be leader-follower connected if for each
connected component $\mathcal{G}_{c_i}$ of $\mathcal{G}_f$, there
exists a leader in the leader subgraph $\mathcal{G}_l$, so that
there is an edge between this leader and a node in
$\mathcal{G}_{c_i}, i=1,\ldots,\gamma$.
\end{definition}

Based on this new graph topology, we can derive the criterion for
structural controllability for multi-agent system (\ref{eq21}) under
fixed topology:
\begin{theorem}\label{the4}
The multi-agent system (\ref{eq21}) with multi-leader and fixed
topology under the communication topology $\mathcal{G}$ is
structurally controllable if and only if graph $\mathcal{G}$ is
leader-follower connected.
\end{theorem}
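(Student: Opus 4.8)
The plan is to reduce the multi-leader fixed-topology case to the structural controllability theory of linear systems recalled in Definitions \ref{def5}--\ref{def6} (Forms I and II), exactly as Lemma \ref{lem2} does for the single-leader case, but now accounting for the fact that $B \in \mathbb{R}^{N\times n_l}$ has several columns (one per leader). I would invoke the classical graph-theoretic criterion (Lin's theorem, via the references \cite{lin}): a structured pair $(A,B)$ is structurally controllable if and only if the digraph $\mathcal{G}(A,B)$ contains no \emph{dilation} and is \emph{spanned by a cactus} / has no form-I or form-II partition, i.e. (i) every state vertex is reachable from some input vertex, and (ii) there is no form-II partition (no "dilation"). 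For the multi-agent Laplacian-structured system these two conditions collapse to one combinatorial statement about $\mathcal{G}$, and the job is to show that statement is precisely leader-follower connectedness.

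First I would prove \emph{necessity}. Suppose $\mathcal{G}$ is not leader-follower connected: then some connected component $\mathcal{G}_{c_i}$ of the follower subgraph $\mathcal{G}_f$ has no edge to any leader. Let $p=|\mathcal{V}(\mathcal{G}_{c_i})|$ be the state vertices in that component. Reordering the followers so that this component comes last, the Laplacian-derived structure forces $A$ to be block lower-triangular with that component decoupled (no arrows into it from the rest of $x$, since $\mathcal{G}_f$-edges only exist inside components) and $B$ to have identically-zero rows on those $p$ coordinates (no leader edges into the component). This is exactly Form I of Definition \ref{def5}, so by Remark \ref{rem4} the pair $(A,B)$ is structurally uncontrollable; equivalently the controllability matrix $[B,AB,\dots,A^{N-1}B]$ has $p$ rows identically zero. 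Hence the multi-agent system is not structurally controllable.

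Next I would prove \emph{sufficiency}: assume $\mathcal{G}$ is leader-follower connected; I must exhibit weights making $(A,B)$ controllable, or equivalently verify that $\mathcal{G}(A,B)$ has neither a Form-I nor a Form-II obstruction. For Form I (reachability): leader-follower connectedness gives, for each follower component $\mathcal{G}_{c_i}$, an edge from some leader into it; since $\mathcal{G}_{c_i}$ is connected and edges are bidirectional, every state vertex in $\mathcal{G}_{c_i}$ is reachable from that input vertex, so (i) holds and no permutation yields Form I. For Form II, I would argue that the undirected/bidirectional nature of the communication edges means the set of state vertices reachable from the inputs is also "co-reachable," so a counting (dilation) obstruction cannot arise once reachability holds — in a bidirectionally-connected structured pair of this Laplacian form, the generic rank of the controllability matrix equals $N$. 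The cleanest route is: by the same column-manipulation / genericity argument as in Lemma \ref{lem2}, each connected follower component $\mathcal{G}_{c_i}$ with an attached leader behaves like an independent single-leader subsystem on $p_i$ states that is structurally controllable by Lemma \ref{lem2}; the full system is then the direct sum of these controllable blocks (possibly with extra shared leader inputs, which only helps), so $(A,B)$ is structurally controllable, i.e. there exist weights $w_{ij}$ achieving rank $N$.

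The main obstacle I anticipate is the sufficiency direction, specifically ruling out a Form-II / dilation obstruction when $n_l>1$ and a single leader may connect to several components or several leaders to one component: I must make sure the genericity argument that worked coordinate-block-by-block really does give full row rank simultaneously (no accidental rank cancellation across blocks sharing a leader), which is where I would lean on the block-diagonal decomposition into single-leader pieces and Lemma \ref{lem2}, rather than manipulating the switched controllability matrix directly.
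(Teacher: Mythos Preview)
Your proposal is correct and takes essentially the same approach as the paper: necessity via an identically-zero row block in the controllability matrix when a follower component has no leader edge (you frame it as Form~I of Definition~\ref{def5}; the paper computes the block directly, citing \cite{zji}), and sufficiency via your ``cleanest route'' of decomposing into single-leader connected pieces and invoking Lemma~\ref{lem2} on each, which is exactly the paper's argument (it cites \cite{ZZHZ} for rank additivity across independent components and resolves the shared-leader obstacle you flag by simply zeroing the weights of all but one leader's edges into each component). One minor slip: for the Form~I reduction the isolated follower component should be ordered \emph{first}, not last, so that the zero block of $PB$ sits on top as in Definition~\ref{def5}.
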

\begin{proof}
\textit{~~Necessity:} The idea of necessity proof is similar to the
proof of lemma 3 in \cite{zji}. We assume that there exists one
connected component $\mathcal{G}_{c_p}$ not connected to the leader
subgraph $\mathcal{G}_l$. Define $A_i$ and $B_i$ matrices as
sub-matrices of $A$ and $B$, the same as the $F_i$ and $R_i$
matrices in lemma 3 of \cite{zji}. Following the analysis in lemma 3
of \cite{zji}, it can be easily got that the controllability matrix
of multi-agent system (\ref{eq21}) is:
\begin{equation}\label{eq22}
  \mathcal{C} = \left[ {\begin{array}{*{20}c}
   B_1 & A_1B_1 & A_1^2B_1 &\cdots& A_1^{N-1}B_1 \\
   \vdots & \vdots & \vdots &\cdots& \vdots \\
   0 & 0 & 0 &\cdots& 0 \\
   \vdots & \vdots & \vdots &\cdots& \vdots \\
   B_\gamma & A_\gamma B_\gamma & A_\gamma^2B_\gamma &\cdots& A_\gamma^{N-1}B_\gamma \\
\end{array}} \right].
\end{equation}
Consequently, rank $\mathcal{C}$ = row rank $\mathcal{C}<N$. The
maximum rank of $\mathcal{C}$ is less than $N$, which implies that
the corresponding multi-agent system (\ref{eq21}) is not
structurally controllable.

\textit{~~Sufficiency:}We use the proof of theorem 1 in \cite{ZZHZ}
to help us prove the sufficiency. The communication graph
$\mathcal{G}$ consists of several connected components
$\mathcal{G}^{(i)}$, $i=1,\ldots,\kappaup$, which can be partitioned
into two subgraphs: induced leader subgraph $\mathcal{G}^{(i)}_l$
and induced follower subgraph $\mathcal{G}^{(i)}_f$. For each
connected components $\mathcal{G}^{(i)}$, $i=1,\ldots,\kappaup$, it
can be modeled as a linear system with its system matrices being
sub-matrices of $A$ and $B$ matrices. Following the analysis in
theorem 1 in \cite{ZZHZ}, we can get the following equation:
\begin{equation}\label{eq23}
rank~\mathcal{C}= rank~ \mathcal{C}_1+rank
~\mathcal{C}_2+\ldots+rank~\mathcal{C}_\kappaup,
\end{equation}
where $\mathcal{C}$ is the controllability matrix of multi-agent
system (\ref{eq21}) and $\mathcal{C}_i$ is the controllability
matrix of connected component $\mathcal{G}^{(i)}$. The independence
of these connected components guarantees the independence of free
parameters in the corresponding matrices, which correspond to the
communication weights of the linkages. Consequently, we have that
\begin{center}\label{eq24}
$g$-$rank$ $\mathcal{C}$ = $g$-$rank$ $\mathcal{C}_1$+ $g$-$rank$
$\mathcal{C}_2$+$\ldots$+ $g$-$rank$ $\mathcal{C}_\kappaup$
\end{center}
where $g$-rank of a structured matrix $M$ is defined to be the
maximal rank that $M$ achieves as a function of its free parameters.
Besides, if in some connected component $\mathcal{G}^{(i)}$, there
is more than one leaders, we can split it into several connected
components with single leader or choose one as leader and set all
weights of the communication linkages between the followers and
other leaders to be zero. After doing this, connected component
$\mathcal{G}^{(i)}$ is a connected topology with single leader.
According to lemma \ref{lem2}, $\mathcal{C}_i$ has full $g$-$rank$,
which equals to the number of follower agents in
$\mathcal{G}^{(i)}$. Moreover, there is no common follower agent
among the connected components. Consequently, $g$-$rank$
$\mathcal{C}$=$N$ and multi-agent system (\ref{eq21}) is
structurally controllable.
\end{proof}

With the above definitions and theorems, we are in the position to
present the graphical interpretation of structural controllability
of multi-agent systems under switching topology with multi-leader:
\begin{theorem}\label{the5}
The multi-agent system (\ref{eq18}) or (\ref{eq20}) with the
communication topologies $\mathcal{G}_i$, $i\in \{1,\ldots,m\}$ and
multi-leader is structurally controllable if and only if the union
graph $\mathcal{G}$ is leader-follower connected.
\end{theorem}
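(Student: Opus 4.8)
The plan is to reduce Theorem~\ref{the5} to Theorem~\ref{the4} by the same mechanism that reduced Theorem~\ref{the1} to Lemma~\ref{lem2}, the only differences being that the fixed-topology criterion is now ``leader-follower connected'' rather than ``connected'' and that each $B_i$ is an $N\times n_l$ block of columns rather than a single column. By Remark~\ref{rem6} it suffices to treat system (\ref{eq20}), and by Remark~\ref{rem3} the union graph $\mathcal{G}$ is the representation graph of the summed pair $(A_1+\cdots+A_m,\ B_1+\cdots+B_m)$, which is precisely the system pair of a fixed-topology multi-leader system (\ref{eq21}) with communication graph $\mathcal{G}$; this correspondence is the bridge I would use in both directions.

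For necessity I would argue by contraposition. Assuming $\mathcal{G}$ is not leader-follower connected, pick a connected component $\mathcal{G}_{c_p}$ of the follower subgraph that touches no leader in $\mathcal{G}$; since $\mathcal{G}$ carries every edge of every $\mathcal{G}_i$, the vertices of $\mathcal{G}_{c_p}$ receive no edge from outside themselves in any subsystem, so after a common relabeling that puts those vertices last, every $B_i$ has zero $\mathcal{G}_{c_p}$-rows and every $A_i$ has its $\mathcal{G}_{c_p}$-rows supported only on $\mathcal{G}_{c_p}$-columns; that is, the summed pair is of form~I. By Remark~\ref{rem4} (equivalently, by the block form (\ref{eq22}) used in the proof of Theorem~\ref{the4}) the controllability matrix of the summed pair has the whole $\mathcal{G}_{c_p}$-row block identically zero. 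I would then expand each power $(\sum A)^k(\sum B)$ into its monomial columns $A_{i_k}\cdots A_{i_1}B_{i_0}$ and use independence of the per-topology weights (equivalently, the purely combinatorial nature of the structural zero pattern) to conclude that every such monomial column also has zero $\mathcal{G}_{c_p}$-rows — a short induction on word length, since the $\mathcal{G}_{c_p}$-rows of each $A_i$ only see $\mathcal{G}_{c_p}$-columns and each $B_i$ has no $\mathcal{G}_{c_p}$-rows. Hence the switched controllability matrix (\ref{eq3}) with $n=N$ has an identically zero row block, never attains row rank $N$, and by Lemma~\ref{lem1} the system is not structurally controllable.

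For sufficiency I would start from Theorem~\ref{the4}: if $\mathcal{G}$ is leader-follower connected then the summed pair $(A_1+\cdots+A_m,\ B_1+\cdots+B_m)$ is structurally controllable, so for some choice of the communication weights the matrix $[\,\sum B,\ (\sum A)(\sum B),\ \ldots,\ (\sum A)^{N-1}(\sum B)\,]$ has full row rank $N$. Then I would repeat the column manipulation from the proof of Theorem~\ref{the1}: expand every block $(\sum A)^k(\sum B)$ as the sum of the monomial blocks $A_{i_k}\cdots A_{i_1}B_{i_0}$, adjoin all these monomial blocks as extra columns (the rank cannot drop), and subtract them off by elementary block-column operations so that each sum is replaced by its summands (the rank is preserved); the resulting matrix is exactly the switched controllability matrix (\ref{eq3}) with $n=N$, which therefore has full row rank $N$, so by Lemma~\ref{lem1} systems (\ref{eq20}) and (\ref{eq18}) are structurally controllable.

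I expect no genuinely hard step; the effort is bookkeeping. The two places that need care are (i) confirming, in the necessity part, that the zero-row pattern really propagates to \emph{every} monomial appearing in (\ref{eq3}), which is where independence of the per-topology parameters is essential, and (ii) checking, in the sufficiency part, that the ``adjoin-then-reduce'' column operations reproduce the controllability matrix of Lemma~\ref{lem1} correctly in the multi-input case $n_l>1$, where the $B_i$ are blocks rather than single columns — this is the only respect in which the argument differs from the single-leader Theorem~\ref{the1}.
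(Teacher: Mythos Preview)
Your proposal is correct and follows essentially the same approach as the paper: the paper's own proof of Theorem~\ref{the5} simply invokes Remark~\ref{rem3} to identify the union graph with the summed pair, appeals to Theorem~\ref{the4} for the equivalence with leader-follower connectedness, and then says ``following the proof steps in theorem~\ref{the1}, this result can get proved.'' You have written out explicitly the column-manipulation and zero-row-propagation details that the paper leaves implicit, and your two flagged checkpoints (monomial zero-row propagation, block-column bookkeeping for $n_l>1$) are exactly the places where care is needed; nothing further is required.
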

\begin{proof}
As stated in remark \ref{rem3}, the union graph $\mathcal{G}$ is the
representation of the linear system:
$(A_1+A_2+A_3+\ldots+A_m,B_1+B_2+B_3+\ldots+B_m)$. Therefore, the
condition that the union graph $\mathcal{G}$ is leader-follower
connected is equivalent to the condition that linear system
$(A_1+A_2+A_3+\ldots+A_m,B_1+B_2+B_3+\ldots+B_m)$ is structurally
controllable. Following the proof steps in theorem \ref{the1}, this
result can get proved.
\end{proof}


\section{ Numerical Examples}

Next we will give two examples to illustrate the results in this
paper and for simplicity, we take single leader case as examples.

\vspace{0.8in} \setlength{\unitlength}{0.0215in}
\begin{picture}(80,35)

\put(10,20){\line(0,1){30}}\put(40,20){\line(0,1){30}}
\put(60,20){\line(1,0){30}}\put(60,50){\line(1,0){30}}
\put(110,20){\line(0,1){30}}\put(140,50){\line(-1,0){30}}
\put(110,20){\line(1,0){30}}\put(140,50){\line(0,-1){30}}

\put(10,20){\circle*{2}}\put(40,20){\circle*{2}}
\put(60,20){\circle*{2}}\put(90,20){\circle*{2}}
\put(110,20){\circle*{2}}\put(140,20){\circle*{2}}
\put(10,50){\circle*{2}}\put(40,50){\circle*{2}}
\put(60,50){\circle*{2}}\put(90,50){\circle*{2}}
\put(110,50){\circle*{2}}\put(140,50){\circle*{2}}

\put(7,20){\makebox(0,0)[c]{$3$}}\put(57,20){\makebox(0,0)[c]{$3$}}\put(107,20){\makebox(0,0)[c]{$3$}}
\put(43,20){\makebox(0,0)[c]{$0$}}\put(93,20){\makebox(0,0)[c]{$0$}}\put(143,20){\makebox(0,0)[c]{$0$}}
\put(7,50){\makebox(0,0)[c]{$1$}}\put(57,50){\makebox(0,0)[c]{$1$}}\put(107,50){\makebox(0,0)[c]{$1$}}
\put(43,50){\makebox(0,0)[c]{$2$}}\put(93,50){\makebox(0,0)[c]{$2$}}\put(143,50){\makebox(0,0)[c]{$2$}}

\put(25,15){\makebox(-2,0){$(a)$}}
\put(75,15){\makebox(-2,0){$(b)$}}
\put(125,15){\makebox(-2,0){$(c)$}}

\put(75,0){\makebox(15,1)[c]{{\footnotesize Fig. 1. Switched network
with two subsystems }}}

\end{picture}
\vspace{0.2in}

We consider here a four-agent network with agent 0 as the leader and
with switching topology described by the graphs in Fig. 1(a)-(b)
(the self-loops are not depicted, because it will not influence the
connectivity). Overlaying the subgraphs together can get the union
graph $\mathcal{G}$ of this example as shown in Fig. 1(c). It turns
out that the union graph of the switched system is connected. By
theorem \ref{the1}, it is clear that the multi-agent system is
structurally controllable.

Next, the rank condition of this multi-agent system will be checked.

From Fig. 1, calculating the Laplacian matrix for each subgraph
topology, it can be obtained that the system matrices of each
subsystem are (one thing we should mention here with the control
strategy that each agent can use its own state information, the
diagonal elements always have free parameters, so we can get the
following form of sub-matrix of Laplacian matrix) :
\begin{equation}\label{124}
A_1= \left[ {\begin{array}{*{20}c}
   \lambda_1 & 0&\lambda_4  \\
   0&\lambda_2 & 0  \\
   \lambda_5&0&\lambda_3
\end{array}} \right],~~B_1=\left[ {\begin{array}{*{20}c}
   0 \\
   \lambda_6  \\
   0
\end{array}} \right];
A_2= \left[ {\begin{array}{*{20}c}
   \lambda_7& \lambda_{10}& 0 \\
   \lambda_{11}&\lambda_8 & 0  \\
   0&0&\lambda_9
\end{array}} \right],~~B_2=\left[ {\begin{array}{*{20}c}
   0 \\
   0  \\
   \lambda_{12}
\end{array}} \right]. \nonumber \end{equation}

According to lemma \ref{lem1}, the controllability matrix for this
switched linear system
is:$\newline[B_1,B_2,A_1B_1,A_2B_1,A_1B_2,A_2B_2,A_1^2B_1,A_2A_1B_1,A_1A_2B_1,\\A_2^2B_1,A_1^2B_2,A_2A_1B_2,A_1A_2B_2,A_2^2B_2].$
\newline After simple calculation, we can find three column vectors
in the controllability matrix:
\begin{equation}\label{124}
\left[ {\begin{array}{*{20}c}
  0  \\
   \lambda_6  \\
   0
   \end{array}} \right],~~\left[ {\begin{array}{*{20}c}
   0 \\
   0\\
   \lambda_{12} \\

\end{array}} \right],~~\left[ {\begin{array}{*{20}c}
   \lambda_4\lambda_{12} \\
   0\\
   \lambda_3\lambda_{12} \\

\end{array}} \right].\nonumber \end{equation}

Imposing all the parameters scalar 1, it follows that these three
column vectors are linearly independent and this controllability
matrix has full row rank. Therefore, the multi-agent system is
structurally controllable.


In the second example, we still consider a four-agent network with
agent 0 as the leader and with switching topology described by the
graphs in Fig. 2(a)-(b). Overlaying the subgraphs together can get
the union graph $\mathcal{G}$ of this example shown in Fig. 2(c). It
turns out that the union graph of the switched system is
disconnected, because agent 2 is isolated. According to theorem
\ref{the1}, it is clear that the multi-agent system is not
structurally controllable.

\vspace{0.8in} \setlength{\unitlength}{0.0215in}

\begin{picture}(80,35)

\put(10,20){\line(0,1){30}}\put(40,20){\line(-1,1){30}}
\put(60,20){\line(1,0){30}}\put(60,50){\line(0,-1){30}}
\put(110,20){\line(0,1){30}}
\put(110,20){\line(1,0){30}}\put(140,20){\line(-1,1){30}}

\put(10,20){\circle*{2}}\put(40,20){\circle*{2}}
\put(60,20){\circle*{2}}\put(90,20){\circle*{2}}
\put(110,20){\circle*{2}}\put(140,20){\circle*{2}}
\put(10,50){\circle*{2}}\put(40,50){\circle*{2}}
\put(60,50){\circle*{2}}\put(90,50){\circle*{2}}
\put(110,50){\circle*{2}}\put(140,50){\circle*{2}}

\put(7,20){\makebox(0,0)[c]{$3$}}\put(57,20){\makebox(0,0)[c]{$3$}}\put(107,20){\makebox(0,0)[c]{$3$}}
\put(43,20){\makebox(0,0)[c]{$0$}}\put(93,20){\makebox(0,0)[c]{$0$}}\put(143,20){\makebox(0,0)[c]{$0$}}
\put(7,50){\makebox(0,0)[c]{$1$}}\put(57,50){\makebox(0,0)[c]{$1$}}\put(107,50){\makebox(0,0)[c]{$1$}}
\put(43,50){\makebox(0,0)[c]{$2$}}\put(93,50){\makebox(0,0)[c]{$2$}}\put(143,50){\makebox(0,0)[c]{$2$}}

\put(25,15){\makebox(-2,0){$(a)$}}
\put(75,15){\makebox(-2,0){$(b)$}}
\put(125,15){\makebox(-2,0){$(c)$}}

\put(75,0){\makebox(15,1)[c]{{\footnotesize Fig. 2. Another switched
network with two subsystems }}}

\end{picture}
\vspace{0.2in}

Similarly, the rank condition of this switched linear system needs
to be checked to see whether it is structurally controllable or not.

From Fig. 2,  calculating the Laplacian matrix for each graphic
topology, it is clear that the system matrices of each subsystem are
:
\begin{equation}\label{124}
A_1= \left[ {\begin{array}{*{20}c}
   \lambda_1 & 0&\lambda_4  \\
   0&\lambda_2 & 0  \\
   \lambda_5&0&\lambda_3
\end{array}} \right],~~B_1=\left[ {\begin{array}{*{20}c}
   \lambda_6  \\
   0 \\
   0
\end{array}} \right];
A_2= \left[ {\begin{array}{*{20}c}
   \lambda_7&0 & \lambda_{10} \\
   0&\lambda_8 & 0  \\
   \lambda_{11}&0&\lambda_9
\end{array}} \right],~~B_2=\left[ {\begin{array}{*{20}c}
   0 \\
   0  \\
   \lambda_{12}
\end{array}} \right].\nonumber  \end{equation}

Computing the controllability matrix of this example yields the
controllability matrix:
\begin{equation}
\left[{\begin{array}{*{20}c}
\lambda_6&0&\lambda_1\lambda_6&
\ldots&\lambda_7\lambda_{10}\lambda_{12}+\lambda_9\lambda_{10}\lambda_{12}\\
0&0&0
&\ldots&0\\
0&\lambda_{12}&\lambda_5\lambda_6&
\ldots&\lambda_{10}\lambda_{11}\lambda_{12}+\lambda_9^2\lambda_{12}
\end{array}}\right].\nonumber
\end{equation}

This matrix has the second row always to be zero for all the
parameter values, which makes the maximum rank of this matrix less
than 3. Therefore, this multi-agent system is not structurally
controllable.


\section{Conclusions and Future Work}

In this paper, the structural controllability problem of the
multi-agent systems interconnected via a switching weighted topology
has been considered. Based on known results in the literature of
switched systems and graph theory, graphic necessary and sufficient
conditions for the structural controllability of multi-agent systems
under switching communication topologies were derived. It was shown
that the multi-agent system is structurally controllable if and only
if the union graph $\mathcal{G}$ is connected (single leader) or
leader-follower connected (multi-leader). The graphic
characterizations show a clear relationship between the
controllability and interconnection topologies and give us a
foundation to design the optimal control effect for the switched
multi-agent system.

Some interesting remarks can be made on this result. First, it gives
us a clear understanding on what are the necessary information
exchanges among agents to make the group of agents behavior in a
desirable way. Second, it provides us a guideline to design
communication protocols among dynamical agents. It is required that
the resulted communication topology among agents should somehow
remain connected as time goes on, which is quite intuitive and
reasonable. Third, it is possible to reduce communication load by
disable certain linkages or make them on and off as long as the
union graph is connected. Several interesting research questions
arise from this scenario. For example, what is the optimal switching
sequence of  topologies in the sense of minimum communication cost?
How to co-design the switching topology path and control signals to
achieve desirable configuration in an optimal way? We will
investigate these questions in our future research.

\bibliographystyle{plain}        

\end{document}